\newtheorem{theorem}{Theorem}
\DeclareMathOperator{\Tr}{Tr}
\newcommand{\be}{\begin{eqnarray}}
\newcommand{\ee}{\end{eqnarray}}
\begin{document}
\title{Higher dimensional communication complexity problems: classical protocols vs quantum ones based on Bell's Theorem or prepare-transmit-measure schemes}
\author{Armin Tavakoli}
\affiliation{Department of Physics, Stockholm University, S-10691 Stockholm, Sweden.}
\affiliation{Groupe de Physique Appliqu\'ee, Universit\'e de Gen\'eve, CH-1211 Gen\'eve, Switzerland}
\author{Marek \.Zukowski}
\affiliation{Institute of Theoretical Physics and Astrophysics, Uniwersytet Gda\'nski, PL-80-952 Gda\'nsk, Poland.}


\date{\today}


\begin{abstract}
Communication complexity problems (CCPs) are tasks in which separated parties attempt to compute a function whose inputs are distributed among the parties. Their communication is limited so that not all inputs can be sent. We show that broad classes of Bell inequalities can be mapped to CCPs and that a quantum violation of a Bell inequality is a necessary and sufficient condition for an enhancement of the related CCP beyond its classical limitation. However, one can implement CCPs by transmitting a quantum system, encoding no more  information than is allowed in the CCP, and extract information by performing measurements. We show that for a large class of Bell inequalities, the improvement of the CCP associated to a quantum violation of a Bell inequality can be no greater than the improvement obtained from quantum prepare-transmit-measure strategies.    

\end{abstract}


\pacs{03.67.Hk,
03.67.-a,
03.67.Dd}

\maketitle

\textit{Introduction.---} Bell's theorem asserts that measurements on separated entangled quantum states can give rise to outcome correlations that have no local realistic model \cite{Bell64}. This fact can be used to break classical limits in communication complexity problems (CCPs) \cite{BD99}. However, quantum protocols for CCPs violating classical bounds, that are based on prepare-transmit-measure schemes involving just a single quantum system are also possible \cite{GALVAO}. This can be certified by a violation of an inequality bounding the strength of the classical counterpart of such a protocol.

Nonclassical features of various quantum predictions are an essential tool in many quantum information tasks such as (semi) device-independent cryptography \cite{AG06, PB11}, randomness generation \cite{PAM10, LP12, MT16} and dimension witnesses \cite{BP08, GB10}. However, in terms of studying the fundamental physical phenomena, correlations due to the entanglement of two or more systems have  been given significantly more attention than those obtained from preparing and measuring a single quantum system. Indeed, little is known about the relation between the strength of the two general types of nonclassical correlations enabled by quantum theory, and their comparative applicability in quantum protocols violating classical bounds in information processing tasks.  

Here, we aim to construct a game-theoretic framework in which one can, on equal footing, compare  the communication complexity reduction power of entanglement based protocols  and single quantum system approaches. For this purpose, we will use a class of information-theoretic games related in fact to CCPs.

 In CCPs, a number of parties, say $N$, attempt to jointly compute a task function $f(X_1,\ldots,X_N)$. However, the input $X_i$ is only known to party  $i$. The task is to maximize the probability of one party to correctly compute $f$ when the amount of allowed communication between the parties is limited by some rule, which does not allow to transmit all data contained in any $X_i$. On one hand, since single system protocols are based on measurements on a transmitted quantum system of a specific dimension $d$, which constrains its information carrying capacity to $\log{d}$ bits, appropriate  CCPs  are a natural habitat in which the quantum strength of such CCP protocols can be studied. On the other hand, Bell inequalities are known to exhibit links to game theory \cite{BL13}.  

The relation between CCPs and  correlations due to entanglement has been extensively studied \cite{BD99, BZ, BZ04}, but initial steps in the direction of using these games as a framework to study both types of quantum CCP protocols has only recently been taken in Refs.\cite{PZ10, magic7, TMP16}.

We will show that for every bipartite Bell inequality, we can formulate a  CCP such that the reduction of communication complexity obtained from using classical communication assisted by correlations due to shared entanglement directly corresponds to the ability of quantum theory to violate the original Bell inequality. However, the CCP can also be implemented in quantum theory by the preparation, transmission and measurement of a single quantum system. 
Using such CCPs as a framework for both types of quantum resources, we will show that for large classes of Bell inequalities, correlations due to measurements on entangled states cannot beat the performance of quantum prepare-transmit-measure protocols.

\textit{The studied class of Bell inequalities.---}
In a bipartite Bell inequality, observers Alice and Bob perform measurements $x\in \{0,\ldots,m_A-1\}$ and $y\in\{0,\ldots,m_B-1\}$ repsectively, with a distribution $p(x,y)$. Each measurement has an outcome $a,b\in\{0,\ldots, d-1\}$ respectively. Such Bell inequalities can in a general way be written as 
\begin{equation}\label{Bell1}
\sum_{x,y}p(x,y)\sum_{a,b=0}^{d-1}\sum_{k=0}^{K}c^k_{ab|xy}P(a,b|x,y)\leq B.
\end{equation}
$B$ is the classical bound, $c^k_{ab|xy}$ are real numbers, and $k$ is an  index with some range $k\in\{0,\ldots,K\}$ for some natural number $K$. This index will allow us to put the inequalities in a form which generalizes the form of the CGLMP inequalities \cite{CGLMP02}. Note that we can without loss of generality assume that $\forall a,b,x,y$, there exists at most one $k\in\{0,\ldots, K\}$ such that $c^k_{ab|xy}\neq 0$. To see this, simply note that if $c^k_{ab|xy}$ and $c^{k'}_{ab|xy}$ with $k\neq k'$ were both non-zero, then in Eq.\eqref{Bell1} we would encounter the terms $c^k_{ab|xy}P(a,b|x,y)+c^{k'}_{ab|xy}P(a,b|x,y)=(c^k_{ab|xy}+c^{k'}_{ab|xy})P(a,b|x,y)$ where the pre-factor is again just some real number.

The Bell inequalities of our interest have the following structure. Firstly, we draw inspiration from a variety of known Bell inequalities \cite{TMP16, LL10, CGLMP02, SA16, BK06, TZ16} in which correlations between Alice's and Bob's local outcomes are quantified using their sum $a+b\mod{d}$. With this in mind, for any given pair of meausrements $(x,y)$, we construct the set $S_{xy}=\{\forall (a,b,k) \text{ such that }c^k_{ab|xy} \text{is defined}\}$. We we require that $S_{xy}$ admits a partition of the form $\{S_{xy}^{i,k}\}_{i,k}$ for $i=1,\ldots, N$ and $k=0,\ldots, K$, for some integers $N,K$, with $S_{xy}^{i,k}=\{(a,b)| a+b=F_{xy}^i(k) \mod{d}\}$ for some functions $F_{xy}^{i}(k)$ i.e.,
\begin{gather}\label{Bell2}\nonumber
	\forall x\forall y: S_{xy}=\bigcup_{i=1}^{N}\bigcup_{k=0}^{K} S_{xy}^{i,k}, \\
	\text{ and  }
S_{xy}^{i,k}\cap S_{xy}^{i',k'}=\emptyset \text{ for } (i,k)\neq (i',k') .
\end{gather}
Remark: since the  sets $S_{xy}^{i,k}$ are disjoint for $i=1,\ldots, N$ and $k=0,\ldots, K$ it follows that there can be no set $(a,b)$ that simultaneously satisfies both $a+b=F_{xy}^j(k)$ and $a+b=F_{xy}^{j'}(k')$ for $(j,k)\neq (j',k')$. This implies that the range of $F_{xy}^j$ is disjoint with that of $F_{xy}^{j'}$ for $j\neq j'$. Also, since $a+b\mod{d}$ can have at most $d$ different values it follows that $(K+1)N\leq d$.

Secondly, we restrict the structure of $c^k_{ab|xy}$ such that we later can make the connection to a related family  of CCPs. To see why this restriction is necessary, we remind ourselves that in CCPs Alice and Bob attempt to compute the value of some function from which they earn some payoff. Although the local outcomes produced from measurements on a (perhaps) entangled state may assist Alice and Bob in performing the computation, the values of these local outcomes are per se of no interest in the CCP. 
Therefore, we require that the Bell inequality is such that the same coefficient $c^k_{ab|xy}$ is assigned to any pair $(a,b)\in S_{xy}^{i,k}$, i.e., we may write $c^k_{ab|xy}=c^{i,k}_{xy}$.  

Thus, the Bell inequalities we will consider are written  
\begin{equation}\label{Bell3}
I^{bell}\equiv \sum_{x,y}p(x,y)\sum_{k=0}^{K}\sum_{i=1}^{N}c^{i,k}_{xy}P_{xy}(a+b=F_{xy}^i(k))\leq B.
\end{equation}
By $\mathscr{B}$, we will denote the some arbitrary  Bell inequality of this form. This class  can be viewed as a generalization of the inequalities considered in Ref.\cite{BZ},  from two-outcome to many-outcome Bell scenarios.

\textit{Representing Bell inequalities $\mathscr{B}$ as payoff bounds for classical CCPs.---} 
Consider the following family of CCPs, which we label $\mathbb{G}_{\mathscr{B}}$. Alice is given one input $x_0\in\{0,\ldots, d-1\}$ with $p(x_0)=1/d$, and one input $x\in\{0,\ldots, m_A-1\}$, whilst Bob receives one input $y\in\{0,\ldots, m_B-1\}$. The inputs $x$ of Alice and $y$ of Bob are distributed according to a  joint probability distribution $p(x,y)$. There is a communication channel from Alice to Bob over which Alice may send at most $\log d$ bits of information in form of a message $m(x_0,x)\in\{0,\ldots,d-1\}$. Having received the message, Bob outputs his guess  $G(y,m)\in\{0,\ldots, d-1\}$. If $G$ coincides with the value of one of the functions $f_{i,k}(x_0,x,y)=x_0+F_{xy}^i(k) \mod{d}$, then Alice and Bob jointly earn a payoff $c^{i,k}_{xy}$. The average earned payoff in $\mathbb{G}_{\mathscr{B}}$ is
\begin{multline}\label{perf}
	I_{\mathbb{G}_{\mathscr{B}}}^{cc}=\\\frac{1}{d}\sum_{x_0,x,y}p(x,y)\sum_{k=0}^{K}\sum_{i=1}^{N}c^{i,k}_{xy}P_{xyk}(G=f_{i,k}(x_0,x,y)).
\end{multline}

In a quantum version of such a CCP, to assist Alice's and Bob's attempts to perform optimally, they may perform measurements on their subsystems in an entangled state. Alice performs a local measurement of a setting labeled by $x$ and obtains the outcome $a\in\{0,\ldots, d-1\}$. Similarly, Bob performs a local measurement labeled by $y$ and obtains the outcome $b\in\{0,\ldots,d-1\}$.  Alice sends a message which depends on $x_0$ and $a$.  The other method is that Alice sends to Bob a quantum system of dimension $d$ in a state which depends on $x_0$ and $x$, upon which Bob performs a measurement of his choice, and somehow produces a guess.

There are many possible ways of implementing  $\mathbb{G}_{\mathscr{B}}$ by choosing different ways of coding the message $m$ and outputting the guess $G$. However, we shall limit the strategies under consideration to only such in which Bob's guess is of the form $m + b(y) \mod{d}$.
 In particular, we call any strategy linear, both in the case of classical and entanglement assisted CCPs, if $m=x_0+a(x)\mod{d}$.  Any other strategy of Alice we call nonlinear. We shall now state and prove a theorem about the optimality of such linear strategies in  $\mathbb{G}_{\mathscr{B}}$.


\begin{theorem}\label{t1}
	The optimal performance in classical $\mathbb{G}_{\mathscr{B}}$ is achieved with a linear strategy. Moreover, the performance of any nonlinear strategy is a probabilistic mixing of the performances of linear strategies. 
\end{theorem}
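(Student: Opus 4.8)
The plan is to exploit two structural features of $\mathbb{G}_{\mathscr{B}}$: the uniform prior on $x_0$, and the fact that the target functions are the \emph{additive} shifts $f_{i,k}(x_0,x,y)=x_0+F^i_{xy}(k)\bmod d$. I would fix an arbitrary strategy in the allowed class, i.e.\ an encoding $m(x_0,x)$ and a decoding $G=m+b(y)\bmod d$, and reparametrise Alice's encoding by its ``slope'' $g_x(x_0)\equiv m(x_0,x)-x_0\bmod d$, recording for each setting $x$ the distribution $q_x$ on $\mathbb{Z}_d$ defined by $q_x(v)=\tfrac1d\,|\{x_0:g_x(x_0)=v\}|$. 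By definition a strategy is linear exactly when every $g_x$ is constant, i.e.\ every $q_x$ is a point mass. Under this substitution the winning event $G=f_{i,k}(x_0,x,y)$ becomes $g_x(x_0)=F^i_{xy}(k)-b(y)\bmod d$, whose $x_0$-dependence enters only through $g_x$; averaging over the uniform $x_0$ therefore replaces the corresponding success probability by $q_x\!\bigl(F^i_{xy}(k)-b(y)\bigr)$, so that
\[
I_{\mathbb{G}_{\mathscr{B}}}^{cc}=\sum_{x,y}p(x,y)\sum_{k=0}^{K}\sum_{i=1}^{N}c^{i,k}_{xy}\,q_x\!\bigl(F^i_{xy}(k)-b(y)\bigr).
\]

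The second step is a multilinear factorisation. For a tuple $\vec a=(a(0),\dots,a(m_A-1))$ of elements of $\mathbb{Z}_d$, set $\mu(\vec a)=\prod_{x'}q_{x'}(a(x'))$; since each $q_{x'}$ sums to one, $\mu$ is a probability distribution over such tuples and $q_x(v)=\sum_{\vec a}\mu(\vec a)\,\delta_{a(x),v}$. Substituting this identity into the display and exchanging the order of summation, the inner sum over $x,y,i,k$ at fixed $\vec a$ is precisely $\sum_{x,y}p(x,y)\sum_{k,i}c^{i,k}_{xy}\,\delta_{a(x)+b(y),\,F^i_{xy}(k)}$, which is the payoff of the \emph{linear} strategy with Alice's encoding $m=x_0+a(x)\bmod d$ and the \emph{same} decoding $b$. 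Hence the performance of the chosen strategy equals $\sum_{\vec a}\mu(\vec a)\,I_{\mathbb{G}_{\mathscr{B}}}^{cc}(\text{linear }(\vec a,b))$, a genuine probabilistic mixture of linear strategies --- this is the second assertion. The first then follows at once: a convex combination never exceeds its largest term, and since there are only finitely many linear deterministic strategies that maximum is attained, so the optimum of classical $\mathbb{G}_{\mathscr{B}}$ is reached by a linear strategy; allowing shared randomness merely prepends an outer convex combination and changes nothing.

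The steps that need care are elementary: checking that $q_x$ is a probability vector (the sets $\{x_0:g_x(x_0)=v\}$ partition $\{0,\dots,d-1\}$), that the Kronecker identity $q_x(v)=\sum_{\vec a}\mu(\vec a)\delta_{a(x),v}$ holds (it amounts to $\prod_{x'\neq x}\sum_{a(x')}q_{x'}(a(x'))=1$), and that the ``collapsed'' indicator $\delta_{a(x)+b(y),F^i_{xy}(k)}$ is exactly the linear-strategy win condition --- for which it is important that the decoding $b$ is carried unchanged through the whole argument. The disjointness of the sets $S^{i,k}_{xy}$ ensures that for any value of $a(x)+b(y)$ at most one $(i,k)$ contributes, matching the payoff rule, though the algebraic identity does not rely on it. I do not expect a genuine obstacle here; the one conceptual point is recognising that the uniform average over $x_0$ compresses all the $x_0$-dependence of an arbitrary encoding into the single one-setting marginal $q_x$, after which the product form of $\mu$ does the rest.
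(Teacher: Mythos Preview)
Your proposal is correct and reaches the same conclusion as the paper, but by a genuinely different and more elementary route. The paper works in the Fourier domain: it represents messages and guesses by powers of $\omega=e^{2\pi i/d}$, rewrites the payoff using $\frac{1}{d}\sum_{x_0}\omega^{-lx_0}\tilde M_x^*(x_0)^l$, and then proves two lemmas showing that this Fourier coefficient is always a convex combination $\sum_\nu\lambda_\nu(x)\,\omega^{l\nu}$ with the \emph{same} weights $\lambda_\nu(x)$ for every $l$. Those weights are exactly your $q_x$, obtained as $\lambda_\nu(x)=k_\nu/d$ where $k_\nu$ counts how often $\omega^\nu$ appears in the sum---i.e.\ how many $x_0$ satisfy $m(x_0,x)-x_0\equiv\nu$. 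You bypass the Fourier machinery entirely by introducing $q_x$ directly from the preimage count and then observing that the winning indicator collapses to $q_x(F^i_{xy}(k)-b(y))$ after the uniform $x_0$-average. Your product-measure step $\mu(\vec a)=\prod_{x'}q_{x'}(a(x'))$ is also a small improvement in explicitness: the paper leaves the mixture stated per setting $x$ (``with probabilities $\lambda_\nu(x)$ chooses the value of the message''), whereas you assemble these into a bona fide distribution over full linear strategies $(\vec a,b)$, which is precisely what the second clause of the theorem asserts. The trade-off is that the paper's Fourier formulation is reused verbatim in the proof of Theorem~2 for the entanglement-assisted case (replacing $\omega^{m}$ by the operator $\omega^{\hat m}$), so if you intend to continue to that theorem you would either need to redo your combinatorial argument at the level of spectral projectors or revert to the Fourier language there.
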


\begin{proof}

We first re-write our Bell inequalities in Eq.\eqref{Bell3}.  The discrete Fourier transform of $P(a+b|x,y)$ can be defined as $E(l|x,y)=\sum_{z=0}^{d-1}\omega^{lz}P(a+b=z|x,y)$ where $\omega=e^{\frac{2\pi i}{d}}$. Its inverse reads 
\begin{equation}
P(a+b=z|x,y)=\frac{1}{d}\sum_{l=0}^{d-1}\omega^{-lz}E(l|x,y).
\end{equation}
Therefore we get
\begin{equation}
P(a+b=F_{xy}^i(k)|x,y)=\frac{1}{d}\sum_{l=0}^{d-1}\omega^{-lF_{xy}^i(k)}E(l|x,y).
\end{equation}
By direct insertion into Eq.\eqref{Bell3}, we may write any Bell inequality $\mathscr{B}$ on the form
\begin{equation}
I^{bell}\equiv \sum_{x,y} \frac{p(x,y)}{d}\sum_{l=0}^{d-1}\sum_{i=0}^{N}\sum_{k=0}^{K}c^{i,k}_{xy}\omega^{-lF_{xy}^i(k)}E(l|x,y)\leq B.
\end{equation}
Next, notice that $E(l|x,y)$ is the average value of the products of the local results, each represented by specific powers of $\omega$, for local settings $x$, $y$.  Namely  $E(l|x,y)
=\langle \omega^{la}\omega^{lb} \rangle_{x,y} $. Thus, for each $l$ we have  a different form of correlation function.

Having written $\mathscr{B}$ in terms of correlators, it is now straightforward to write down the performance \eqref{perf} in $\mathbb{G}_{\mathscr{B}}$ in this terminology. The property of $d$-th roots of unity $\sum_{l=0}^{d-1}\omega^{l(z-q)}=d\delta_{z,q}$, where $z, q$ are integers, allows one to put the logical value of question of whether a guess $G$ equals to $f_{i,k}$ in the form of $\frac{1}{d}\sum_{l=0}^{d-1}\omega^{l(G-f_{i,k})}.$ Thus the  payoff of a $\mathbb{G}_{\mathscr{B}}$ class game, if the answer is $G$, is given by: 
\begin{multline}\label{CCP-PAYOFF}
I^{cc}_{\mathscr{B}}=\frac{1}{d}\sum_{x,y}p(x,y)\\
\sum_{x_0=0}^{d-1}\sum_{l=0}^{d-1}\sum_{i=1}^{N}\sum_{k=0}^{K}c^{i,k}_{xy}\omega^{-lF_{xy}^i(k)}\omega^{-lx_0}\tilde{G}^*(x_0,x,y)^l,
\end{multline}
where  $\tilde{G}=\omega^{G}$ is represents guess output by Bob, transformed into a power of $\omega$.   
Representation of the guess and the message in form of powers of $\omega$ will play an important technical role in what follows.

Assume now that Alice and Bob apply some general strategy in $\mathbb{G}_{\mathscr{B}}$, i.e., $G=G(m(x_0,x),y)=m(x_0,x)+b(y)$.
The guess $\tilde{G}_{xy}(x_0)=\omega^{G_{xy}(x_0)}=\tilde{M}_{x}(x_0)\omega^{b(y)}$  where $\tilde{M}_{x}(x_0)=\omega^{m(x_0,x)}$ is aways equal to some integer power of $\omega$.
 We shall analyze function $\tilde{M}$ by treating $x,$ as an index, for fixed values of which $\tilde{M}_{x}(x_0)$ is a function of $x_0$ only.

Notice that the part of the expression  in (\ref{CCP-PAYOFF}) which depends on $x_0$ is $\frac{1}{d}
\sum_{x_0=0}^{d-1}\omega^{-lx_0}\tilde{M}_{x}^*(x_0)^l.$   We see that  we have here the $l$-th value of a discrete Fourier transform of $(\tilde{M}_{x})^l$.
As we shall see below Fourier transforms of powers of  functions which can have values only in the form of powers of $\omega$, have very specific properties.

Take a function $B(x_0)$, where $x_0=0,1,...,d-1$, such that its values are only in the set the set $\omega^0, \omega, ...,\omega^{d-1}$. 

\textbf{Lemma 1.} The discrete Fourier transform  of $B$, defined as $K(l,B)=\frac{1}{d}\sum_{x_0=0}^{d-1}\omega^{-lx_0}B(x_0),$   has the following property: either it is such that:  (A), only for one value of $l$, say $l=s$, one has $K(s,B)\neq 0$, and then $K(s,B)$  is a power of $\omega$, or (B), for every $l$ the value  $K(l,B)$ is some {\em convex} combinations of some subsets of numbers  $\omega^0, \omega, ...,\omega^{d-1}$. 

Proof: The values of $B(x_0)$, are in the form $\omega^{n(x_0)}$, where $n$ is an integer function of $x_0$. Its discrete Fourier transform is 
$K(l,B)=\frac{1}{d}\sum_{x_0=0}^{d-1}\omega^{n(x_0)}\omega^{-lx_0}$, which for every $l$ is exactly such a convex combination. In particular, if the coefficients are not proper convex combinations then only one of them is non-zero. If this is the case for $K(l=s,B)$, then  this is if, and only if,  $B(x_0)=\omega^t\omega^{sx_0}$, where $s,t$ are integers, and $K(l=s,B)=\omega^t$. \qed

\textbf{Lemma 2.} The Fourier expansion coefficients of  powers of  function $B(x_0)$, that is  $B(x_0 )^r$ where $r$ is an integer, have the following form. Assume that for the function   $B$ the Fourier transform 
$K(l=1,B)$  is   in the form of the following convex combination $K(1, B)=\sum_{\nu=0}^{d-1}\lambda_{\nu }\omega^{\nu}$. 
Then the $l=r$ value of the Fourier transform of  $B^r$
 is given by  $K(l=r, B^r)=\sum_{\nu=0}^{d-1}\lambda_{\nu} \omega^{r{\nu}}$. That is, it is a convex combination of $r$-th powers of $\omega^{\nu}$, with the same coefficients $\lambda_{\nu}$, as $K(1,B).$  

Proof:
 The convex combination  coefficients of $K(1, B)$, that is $\lambda_{\nu}$'s, 
in fact,  are equal to $\frac{k_{\nu}}{d}$, where each $k_{\nu}=0,1,..., d$ tells us how many times  the number  $\omega^{\nu}$ 
{appears in $K(1,B)=\frac{1}{d}\sum_{x_0=0}^{d-1}\omega^{n(x_0)}\omega^{-x_0}$}.
As $\omega^{ln(x_0)}\omega^{-lx_0}=(\omega^{n(x_0)}\omega^{-x_0})^l$ and $K(r,B^r)=\frac{1}{d}\sum_{x_0=0}^{d-1}\omega^{rn(x_0)}\omega^{-rx_0}, $ we see that if $\omega^{\nu}$ appears 
$k_{\nu}$ times in $K(1,B)$, so does $\omega^{r\nu}$ in $K(r, B^r).$ \qed

Thus one can replace in  Eq.\eqref{CCP-PAYOFF} the expression $\sum_{x_0=0}^{d-1}\omega^{-lx_0}\tilde{M}_{x}^*(x_0)^l  $  by 
 $K(l, M_x^l)=\sum_{\nu=0}^{d-1}\lambda_{\nu}(x) \omega^{l{\nu}}$. Thus any strategy which is different from the linear one is effectively in terms of payoffs equivalent to a probabilistic strategy in which Alice with probabilities  $\lambda_{\nu}(x)$ chooses the value of the message to be sent to Bob. Such probabilistic strategies are never better than the optimal deterministic one. In the case of a linear strategy  we have situation (A) of Lemma 1,
and thus it is deterministic. Obviously the bound for such a strategy is given by $B$. 
\end{proof}

Let us now move to the quantum strategies which use classical communication, and correlations due to entanglement as a source for information processing, which supplies the partners with partially correlated random noise. The following theorem holds.

\begin{theorem}
The optimal quantum strategy based on classical communication assisted by entanglement for $\mathbb{G}_{\mathscr{B}}$, employs  the linear strategy of messaging,  and achieves its best performance identical to the Tsirelson bound for the associated Bell inequality $\mathscr{B}$.
\end{theorem}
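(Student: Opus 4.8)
The plan is to re-run the Fourier argument behind Theorem~\ref{t1}, but now keeping the quantum correlators $E(l|x,y)=\langle\omega^{la}\omega^{lb}\rangle_{x,y}$ intact, and to show that \emph{every} admissible entanglement-assisted strategy is equivalent, in terms of payoff, to a \emph{linear} one evaluated on a suitable quantum-realizable correlation $P'(a',b|x,y)$. Since the payoff of a linear strategy is precisely $I^{bell}$ evaluated on that correlation, it is bounded by the Tsirelson bound of $\mathscr{B}$, with the bound saturated by the Tsirelson-optimal measurements; this gives both halves of the statement.

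I would fix the most general strategy of the prescribed form --- a shared state $\rho_{AB}$, Alice's POVMs $\{M^x_a\}$ with outcome $a$, Bob's POVMs $\{N^y_b\}$ with outcome $b$, a message $m=m(x_0,x,a)\in\{0,\dots,d-1\}$, and Bob's guess $G=m+b\bmod d$ --- so that in Eq.~\eqref{CCP-PAYOFF} one has $\tilde G_{xy}(x_0)=\tilde M_{x,a}(x_0)\,\omega^{b}$ with $\tilde M_{x,a}(x_0)=\omega^{m(x_0,x,a)}$, and classical averages become $\Tr[\rho\,(\cdot\otimes\cdot)]$; local and shared randomness are absorbed into $\rho_{AB}$ by convexity. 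Then I isolate the $x_0$-dependent factor of \eqref{CCP-PAYOFF} exactly as in the proof of Theorem~\ref{t1}: for fixed $x$ and fixed outcome $a$ it is, in the notation of Lemma~1, the coefficient $K(l,\tilde M_{x,a}^{l})$ of the $l$-th power of the $\omega$-valued function $x_0\mapsto\tilde M_{x,a}(x_0)$, so Lemmas~1 and~2 apply verbatim and it equals $\sum_{\nu}\lambda_\nu(x,a)\,\omega^{l\nu}$ --- a convex combination over $\nu\in\mathbb{Z}_d$ with weights $\lambda_\nu(x,a)=\tfrac1d\bigl|\{x_0:\ m(x_0,x,a)-x_0\equiv\nu\}\bigr|$, reducing to a single $\nu$ of weight $1$ precisely in the linear case (case~(A) of Lemma~1). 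The decisive structural point is that Bob's factor enters \eqref{CCP-PAYOFF} as $\omega^{lb}$ at the \emph{same} index $l$, so $\omega^{l\nu}\omega^{lb}$ recombines into $\omega^{l(a'+b)}$ after relabelling $a'=\nu$; reinstating $\omega^{-lF^i_{xy}(k)}$ and summing back over $l$, the payoff of the general strategy becomes $\sum_{x,y}p(x,y)\sum_{i,k}c^{i,k}_{xy}\,P'\!\bigl(a'+b=F^i_{xy}(k)\mid x,y\bigr)$, i.e.\ exactly $I^{bell}$ for $P'(a',b|x,y)=\sum_a\lambda_{a'}(x,a)\,\Tr[\rho\,(M^x_a\otimes N^y_b)]$, attained with linear messaging.

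It remains to note that $P'$ is itself quantum: $\widetilde M^x_{a'}:=\sum_a\lambda_{a'}(x,a)M^x_a$ is a bona fide POVM --- positivity and completeness following from $\lambda_{a'}(x,a)\ge0$ and $\sum_{a'}\lambda_{a'}(x,a)=1$ --- being nothing but Alice's measurement followed by a classical relabelling of her outcome, so $P'(a',b|x,y)=\Tr[\rho\,(\widetilde M^x_{a'}\otimes N^y_b)]$. Hence every admissible strategy performs as $I^{bell}[P']$ for some quantum $P'$, and thus at most the Tsirelson bound of $\mathscr{B}$; conversely, taking $P'$ to be a Tsirelson-optimal correlation and using the linear strategy $m=x_0+a$ saturates it, so the optimum is reached with linear messaging and equals the Tsirelson bound. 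I expect the one genuinely delicate point to be this last reduction --- verifying that letting the message depend on the measurement outcome $a$ is harmless, i.e.\ that folding the weights $\lambda_{a'}(x,a)$ into Alice's POVM keeps us inside the quantum set rather than enlarging it --- together with checking that the Fourier indices on Alice's and Bob's sides really do line up so that Lemma~2 can be invoked, which is precisely where the restricted form \eqref{Bell3} of $\mathscr{B}$ enters.
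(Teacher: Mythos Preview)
Your proposal is correct and follows essentially the same Fourier-plus-convexity route as the paper: isolate the $x_0$-dependent factor, apply Lemmas~1 and~2 to the $\omega$-valued message function, and observe that any nonlinear messaging is, in terms of payoff, a convex mixture of linear ones and hence dominated by the best linear strategy, whose payoff is $I^{bell}$ and therefore bounded by the Tsirelson bound of $\mathscr{B}$.

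Where you differ is mainly in packaging. The paper works at the operator level: it spectrally decomposes the message observable $\omega^{\hat m(x_0,x)}=\sum_i \omega^{\eta_i(x_0,x,\xi_i)}\Pi^i(x)$, analyzes the ``effective eigenvalue'' $\tfrac1d\sum_{x_0}\omega^{-lx_0}\omega^{l\eta_i}$, and concludes that nonlinear $\eta_i$ produce convex combinations of powers of $\omega$, interpreted as probabilistic mixtures of deterministic eigenvalue assignments. You instead work directly with POVMs and correlations: you fold the convex weights $\lambda_{a'}(x,a)$ into a post-processed POVM $\widetilde M^x_{a'}=\sum_a\lambda_{a'}(x,a)M^x_a$ and exhibit the resulting $P'(a',b|x,y)$ explicitly as a quantum-realizable distribution. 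This buys you a cleaner justification of the Tsirelson-bound step---the paper's phrase ``a mixture of strategies is never better than some deterministic strategy'' leaves implicit precisely the point you make explicit, namely that each term in the mixture corresponds to a bona fide quantum correlation for $\mathscr{B}$---and it also covers general POVMs rather than the projective measurements the paper tacitly assumes. Conversely, the paper's operator formulation makes slightly more visible why the restriction $G=m+b(y)$ matters: it forces the factorization $\hat Q_{xy}(x_0)=\omega^{\hat m(x_0,x)}\otimes\omega^{\hat b(y)}$, which is what lets the Fourier index on Alice's and Bob's sides align.
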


\begin{proof} Using a linear strategy, Bob effectively outputs $G=x_0+a+b\mod{d}$. In order to compute $f_{i,k}$, note that $G=f_{i,k} \Leftrightarrow a+b=F_{xy}^i(k)$. In particular, this strategy eliminates the dependence in Eq.\eqref{perf} on $x_0$. Therefore, the average payoff becomes,
\begin{equation}\label{metric1}
	I_{\mathbb{G}_{\mathscr{B}}}^{cc}=\sum_{x,y}p(x,y)\sum_{k=0}^{K}\sum_{i=1}^{N}c^{i,k}_{xy}P_{xyk}(a+b=F_{xy}^i(k)).
\end{equation}
However, this is precisely the same as the left-hand-side of Eq.\eqref{Bell3}. Since theorem 1 asserts that linear strategies are optimal for implementing  $\mathbb{G}_{\mathscr{B}}$, it follows from Eq.\eqref{Bell3} that
\begin{equation}\label{ccp}
I_{\mathbb{G}_{\mathscr{B}}}^{cc}\leq B,
\end{equation} 
and that the performance in $\mathbb{G}_{\mathscr{B}}$ with a strategy based on classical communication assisted by entanglement can achieve the Tsirelson bound of $\mathscr{B}$.

But is the linear classical messaging strategy also optimal in the entanglement assisted protocol? Notice that in the case of a quantum protocol, we have to introduce an ``answer" observable $\hat{Q}_{}(x_0,x,y)$ of eigenvalues which are powers of $\omega$. This is because Alice, if she follows a deterministic messaging strategy based of her measurement results, the setting of which are determined by her local data $x$, as well as directly on her data, must act as follows. She measures an observable  $\hat{A}(x)$, and if her $i$-th detector fires, she gets an eigenvalue $\xi_i({x})$, whatever it is. Therefore her message will be a function of $\xi_i$ and $x_0,x$, in the form of $m(x_0,x, \xi_i(x))$. But this can be treated a direct measurement of an observable $\hat{m}=m(x_0,x, \hat{A}(x) ) $, which as we know always commutes with $\hat{A}(x)$. Non-degenerate commuting observables differ only by their eigenvalues, but share projectors onto eigenstates. Any degenerate observable can always be put in a form which also has the above features.

The performance of the entanglement assisted protocol is therefore measured by 
\begin{multline}\label{CCP-PAYOFF-Q}
I^{cc}_{\mathscr{B}}=\frac{1}{d}\sum_{x,y}p(x,y)\\
\sum_{x_0=0}^{d-1}\sum_{l=0}^{d-1}\sum_{i=1}^{N}\sum_{k=0}^{K}c^{i,k}_{xy}\omega^{-lF_{xy}^i(k)}\omega^{-lx_0}Tr[\hat{Q}(x_0,x,y)^l\rho^{AB}],
\end{multline}
where $\rho^{AB}$ is the state. With the  assumption that the guess of Bob has the structure $m+b(y)$, the structure of $\hat{Q}_{xy}(x_0)$ must be as follows:
\begin{equation}
\hat{Q}_{xy}(x_0)= \omega^{\hat{m}(x_0,x)}\otimes \omega^{\hat{b}(y)}. 
\end{equation}
The hats denote here local observables of integer eigenvalues. Just as in the classical case, the crucial point is the analysis of the operators given by the expression 
 $\hat{M}(l,x)=\frac{1}{d}
\sum_{x_0=0}^{d-1}\omega^{-lx_0}\omega^{\hat{m}(x_0,x)l}.$
The message observable  $ \omega^{\hat{m}(x_0,x)}$ can be split into sum of projectors $\Pi^i(x)$ multiplied by the associated eigenvalues $\omega^{\eta_i(x_0,x_1,\xi_i)}$. Each $\eta_i$ can be a different function of integer values. This represents the possible strategies of Alice, of how to form the message, once  the result of her measurement of $\hat{m}$ is a collapse of the state given by the  projector $\Pi^i(x)$. This reflects  all possible value assignments to the obtained measurement results, represented by the projectors. Of course  $\sum_{i=1}^{d}\Pi^i(x)=\hat{I}$, where $\hat{I}$ is the local identity operator. With all that, one has  
\begin{equation}\label{AL}
\hat{M}(l,x)=\frac{1}{d}
\sum_{x_0=0}^{d-1}\omega^{-lx_0}\sum_{i=1}^{d}\omega^{l\eta_i(x_0,x,\xi_i)}\Pi^i(x). \end{equation}
Therefore our analysis now moves to the properties of the `effective eigenvalue'  $\frac{1}{d}\sum_{x_0=0}^{d-1}\omega^{-lx_0}\omega^{l\eta_i(x_0,x,\xi_i)}$.
The messaging protocol strategies  are defined the by the structure of the functions $\eta_i$. If one has $\omega^{\eta_i(x_0,x,\xi_i)}\neq \omega^{x_0}\omega^{\xi_i(x)} ,$ then just as in the classical case the effective eigenvalue which survives the summation  over $x_0$ is a convex combination
$\sum_{\nu=0}^{d-1}\lambda_\nu(x,i)\omega^{l\nu}$, where as before $\lambda_\nu =k_\nu/d$ , and  
 $k_{\nu}$ tell us how many times in the sums $\omega^\nu$ is repeated in the sum
$\sum_{x_0=0}^{d-1}\omega^{-lx_0}\omega^{l\eta_i(x_0,x, \xi_i)}$.
The convex combination can be interpreted as a probabilistic mixture of eigenvalues which are powers of $\omega$. Thus, it represents a probabilistic mixture of eigenvalue strategies. However, a mixture of strategies is never better than some deterministic strategy, which thus can give the upper bound of Eq.(\ref{CCP-PAYOFF-Q}). Thus, the eigenvalues should read $\omega^{x_0}\omega^{\xi_i(x)}$. In such a case, our message observable  $ \omega^{\hat{m}(x_0,x)}$ factorizes to  $\omega^{x_0}\sum_{i=1}^{d}\omega^{\xi_i(x)}\Pi^i(x).$ 
We  get a linear strategy 
 and the message, if detector $i$ fires, is $x_0+\xi_i(x)$.  \end{proof}

Essentially, the linear strategy allows us to interpret $x_0$ as a scrambler that Alice uses to randomize her message, as  Bob has no information whatsoever on $a(x)$ for the classical case  or  $\xi_i(x)$ for the  quantum one.  It is never unscrambled, however the linear strategy allows  Bob to guess effectively the functions $f_{i,k}$. This places the original Bell inequality $\mathscr{B}$ and the performance of the linear strategy in $\mathbb{G}_{\mathscr{B}}$ on equal footing: whenever quantum correlations can be used to achieve some value of $I^{bell}$, they can be used to assist classical communication in $\mathbb{G}_{\mathscr{B}}$ such that $I_{\mathbb{G}_{\mathscr{B}}}^{cc}=I^{bell}$, and vice versa.

\textit{Implementing communication complexity reduction protocols  with quantum prepare-transmit-measure strategy.---} We now turn our attention to quantum implementations of $\mathbb{G}_{\mathscr{B}}$ with prepare-transmit-measure protocols. In such a scenario, Alice uses her input data $(x_0,x)$ to prepare a physical state of information content at most $\log d$ bits, i.e., a density matrix $\rho_{x_0x}$ of a $d$-dimensional system. She sends the system to Bob who performs a measurement on it using an observable, the choice of which is dictated by $y$, and obtains an outcome $b_y$. We can easily transform the performance metric of $ \mathbb{G}_{\mathscr{B}}$ in Eq.\eqref{perf} to this alternative implementation in a prepare-transmit-measure scenario. Whenever the output of Bob satisfies $b_{y}=f_{i,k}(x_0,x,y)$  the partnership earns a payoff $c^{i,k}_{xy}$. The average earned payoff is
\begin{eqnarray}\label{ccp2}
&I_{\mathbb{G}_{\mathscr{B}}}^{qc}=\frac{1}{d}\sum_{x_0=0}^{d-1}\sum_{x,y}p(x,y)\nonumber \\&\times
\sum_{i=1}^{N}\sum_{k=0}^{K}c^{i,k}_{xy}P(b_y=f_{i,k}|\rho_{x_0x}, y).&
\end{eqnarray}

Thus, since  $\mathbb{G}_{\mathscr{B}}$ implemented with entanglement and classical communication always can be implemented also with  a prepare-transmit-measure quantum scenario, we can use the considered CCPs as a game-theoretic framework in which we can speak about the two types of quantum protocols  on equal footing.

\textit{Entanglement vs transmission of a quantum system.---}

We make two limiting assumptions: \textbf{AI}, for any Bell inequality $ \mathscr{B}$ we consider situations in which it is violated by quantum predictions, which are achievable with some sets of $d$-outcome measurements of Alice and Bob on entangled systems in a  state $\rho^{AB}\in\mathbb{C}^D\otimes\mathbb{C}^d$ for some integer $D\geq d$, and \textbf{AII}, we consider only such measurements and states used to achieve the maximal violation of the classical bound of the inequality $\mathscr{B}$ for which the following holds:    local measurements of whichever observable in Alice's set gives  uniformly random local results. 

\begin{theorem}
	Assume \textbf{AI} and \textbf{AII}. With any given correlations due to entanglement  we can associate prepare-transmit-measure protocol which  achieves $I_{\mathbb{G}_{\mathscr{B}}}^{cc}=I_{\mathbb{G}_{\mathscr{B}}}^{qc}$.
\end{theorem}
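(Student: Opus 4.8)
The plan is a source-replacement argument: Alice's local measurement on the shared state together with the linear classical message of Theorem~2 are simulated by a single prepared $d$-dimensional state, and the payoff Eq.~\eqref{ccp2} is then matched, term by term, against Eq.~\eqref{metric1}.

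First, fix a realisation of the given correlations: a state $\rho^{AB}\in\mathbb{C}^D\otimes\mathbb{C}^d$ and measurements $\{\hat A_a(x)\},\{\hat B_b(y)\}$ obeying \textbf{AI}--\textbf{AII}. By \textbf{AII} one has $P(a|x)=1/d$, so the normalised steered states $\sigma^B_{a|x}\myeq d\,\Tr_A[(\hat A_a(x)\otimes\mathbb{I})\rho^{AB}]$ are legitimate $d$-dimensional density operators, with $P_{xy}(a+b=z)=\sum_{a}\tfrac1d\Tr[\hat B_{z-a}(y)\,\sigma^B_{a|x}]$; by Theorem~2 the entanglement-assisted value is $I_{\mathbb{G}_{\mathscr{B}}}^{cc}=\sum_{x,y}p(x,y)\sum_{i,k}c^{i,k}_{xy}P_{xy}(a+b=F^i_{xy}(k))=I^{bell}$. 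Let $X$ be the cyclic shift $X|j\rangle=|j+1\bmod d\rangle$ on $\mathbb{C}^d$. In the PTM protocol, given $(x_0,x)$ Alice draws $a\in\{0,\dots,d-1\}$ uniformly and prepares
\begin{equation}
\rho_{x_0x}=X^{x_0}\Big(\frac1d\sum_{a=0}^{d-1}X^{a}\,\sigma^B_{a|x}\,X^{-a}\Big)X^{-x_0},
\end{equation}
a valid $d$-dimensional density operator (a convex mixture of unitary conjugates of the $\sigma^B_{a|x}$), hence admissible under the $\log d$ constraint of $\mathbb{G}_{\mathscr{B}}$; she transmits it, and Bob, given $y$, measures $\{\hat B_b(y)\}$ and declares the outcome as his guess $b_y$. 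Here $X^{x_0}$ plays the role of the scrambling classical message, while the random mixture of the $X^{a}\sigma^B_{a|x}X^{-a}$ plays the role of the entanglement-induced steering.

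The payoff then follows from the identity $P(b_y=c\mid\rho_{x_0x},y)=P_{xy}(a+b=c-x_0)$, obtained by commuting $X^{x_0}X^{a}$ through $\hat B_c(y)$ with the help of the covariance relation $X^{-t}\hat B_b(y)X^{t}=\hat B_{b-t}(y)$; in particular $P(b_y=x_0+F^i_{xy}(k)\mid\rho_{x_0x},y)=P_{xy}(a+b=F^i_{xy}(k))$. Inserting this into Eq.~\eqref{ccp2} the summand no longer depends on $x_0$, so the $x_0$-average is trivial and
\begin{equation}
I_{\mathbb{G}_{\mathscr{B}}}^{qc}=\sum_{x,y}p(x,y)\sum_{i=1}^{N}\sum_{k=0}^{K}c^{i,k}_{xy}\,P_{xy}(a+b=F^i_{xy}(k))=I^{bell}=I_{\mathbb{G}_{\mathscr{B}}}^{cc},
\end{equation}
which is the assertion.

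The step I expect to be the main obstacle is justifying the covariance of Bob's measurement used above: it is dispensable for the mere bound $I^{qc}\le B$ but it is exactly what upgrades the construction to the equality $I^{qc}=I^{cc}$, and a naive symmetrisation ``washes out'' the state information. I would argue for it by noting that $I^{bell}$, and hence the optimisation fixing the Tsirelson-saturating measurements of the class $\mathscr{B}$, depends on the local outcomes only through the sum $a+b\bmod d$, which is invariant under the joint relabelling $a\mapsto a+t$, $b\mapsto b-t$; assumptions \textbf{AI}--\textbf{AII} therefore let us choose a realisation of the maximal violation in which, moreover, Bob's measurements are of the covariant form $\hat B_b(y)=V_y X^{b}|0\rangle\langle 0|X^{-b}V_y^{\dagger}$ with $[V_y,X]=0$ (e.g. $V_y$ Fourier-diagonal), as in the standard CGLMP-type optimal strategies~\cite{CGLMP02}, so that $X^{-t}\hat B_b(y)X^t=\hat B_{b-t}(y)$ holds identically and the argument goes through without loss of generality. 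It remains only to check the admissibility of $\rho_{x_0x}$ in $\mathbb{G}_{\mathscr{B}}$ ($\Tr\rho_{x_0x}=1$, $\rho_{x_0x}\ge0$, acting on $\mathbb{C}^d$), which is immediate.
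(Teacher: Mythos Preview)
Your construction introduces an unjustified extra assumption and, in doing so, overlooks the much simpler argument the paper uses.

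The gap is exactly where you flag it: the covariance relation $X^{-t}\hat B_b(y)X^{t}=\hat B_{b-t}(y)$ is not a consequence of \textbf{AI} or \textbf{AII}. The relabelling symmetry $a\mapsto a+t,\ b\mapsto b-t$ of the Bell functional only tells you that the \emph{value} is invariant under permuting outcome labels; it does not produce a unitary that intertwines Bob's POVM elements. Asserting that one may ``choose a realisation'' with $\hat B_b(y)=V_yX^{b}|0\rangle\langle0|X^{-b}V_y^{\dagger}$ and $[V_y,X]=0$ is an additional structural hypothesis, true for CGLMP-type optima but not implied for an arbitrary $\mathscr{B}$ under \textbf{AI}--\textbf{AII}. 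Without it, your identity $P(b_y=c\mid\rho_{x_0x},y)=P_{xy}(a+b=c-x_0)$ fails and the payoff computation collapses.

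The paper avoids this entirely. The point you miss is that \textbf{AII} makes Alice's outcome $a$ uniform on $\{0,\dots,d-1\}$, \emph{just like} $x_0$. Hence no shift operators or mixing over $a$ are needed: one simply lets $x_0$ \emph{play the role of} Alice's outcome and sends the bare steered state,
\[
\rho_{x_0x}=\varrho_{x_0,x}\equiv d\,\Tr_A\big[(\Pi^{x_0}(x)\otimes\mathbf{1})\rho^{AB}\big],
\]
with Bob using the same measurements as in the Bell test. Then $P(b_y=b\mid\rho_{x_0x},y)=P(b\mid a=x_0,x,y)$, and the $x_0$-average in Eq.~\eqref{ccp2} reproduces the marginalisation over $a$ in Eq.~\eqref{Bell3} directly (up to the trivial relabelling $a\leftrightarrow -a\bmod d$, which preserves both \textbf{AII} and the value of $\mathscr{B}$). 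This uses only that the steered states live in $\mathbb{C}^d$ (\textbf{AI}) and that the two uniform random variables $x_0$ and $a$ can be identified (\textbf{AII}); no covariance of Bob's measurement is invoked.
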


\begin{proof}
We already shown that in quantum theory, for given sets of measurements and a given state,  the maximal value of $I^{bell}$ is the same as that of $I_{\mathbb{G}_{\mathscr{B}}}^{cc}$. Thus, let us  study  quantum violations of the Bell inequality $\mathscr{B}$. Consider the state, $\rho^{AB}$, and the measurements used to achieve a violation of $\mathscr{B}$. The projector  $\Pi^{i}(x)$  of Alice is  associated with  her measurement setting $x$ and her outcome $i$. Now, in the prepare-transmit-measure protocol of $\mathbb{G}_{\mathscr{B}}$, we define the preparations of Alice as the local states of Bob in the Bell scenario after Alice's local measurement, i.e.,  
\begin{equation}\label{rho}
\varrho_{i,x}=d\Tr_A\left(\Pi^{i}(x)\otimes \mathbf{1} \rho^{AB}\right).
\end{equation}
Note that because of assumption \textbf{AI}, communication of the states in Eq.\eqref{rho} is always allowed. Because of assumption \textbf{AII}, we have $p(i|x)=1/d$. Remember that $p(x_0|x)=1/d$ was a premise  when we defined $\mathbb{G}_{\mathscr{B}}$. Therefore in the prepare-transmit-measure scenario we define the set  of Alice's states as $\rho_{x_0x}=\varrho_{i=x_0,x}$. If Bob performs the same measurements as those used to achieve the violation of  $\mathscr{B}$ it follows by construction that there is an analogous violation of Eq.\eqref{ccp2} yielding $I_{\mathbb{G}_{\mathscr{B}}}^{cc}=I_{\mathbb{G}_{\mathscr{B}}}^{qc}$. 


\end{proof}

However, the opposite of Theorem 3 need not be true. Simply by giving some suitable alterations  to some particular  states in the set of preparations  $\{\rho_{x_0x}\}$, we would not be able to reproduce the communicated states by local measurements on an entangled state. This leads to a qualitative relation between the two types of quantum resources: 
\begin{equation}\label{comp}
\text{\textbf{AI} and \textbf{AII}} \Rightarrow I_{\mathbb{G}_{\mathscr{B}}}^{qc}\geq I_{\mathbb{G}_{\mathscr{B}}}^{cc}.
\end{equation}
For any Bell inequality satisfying the given assumptions,  prepare-transmit-measure methods are at least as powerful as  correlations due to entanglement. Of course, from our discussion so far, it is not necessarily the case that a strict inequality can be observed. However, case studies \cite{magic7, TMP16} based on two different families of Bell inequalities satisfying assumptions   \textbf{AI} and \textbf{AII} have revealed multiple such examples. 

However, if we are given a Bell inequalty that does not fulfill both \textbf{AI} and \textbf{AII}, we may find that quantum correlations due to entangled states are  more powerful than prepare-transmit-measure protocols. In fact, for any Bell inequality $\mathscr{B}$ with binary outcomes that achieves its Tsirelson bound by measurements on an entangled state of two $D$-level quantum systems with $D>2$, entanglement is a strictly stronger resource than preparation-transmission-measurement method with  a qubit in $\mathbb{G}_{\mathscr{B}}$. To show this, note that it was shown in Ref.\cite{Hyperbit} that for any CCP with binary answers, entanglement is as least as good a resource as transmission of a single qubit. Note also that $\mathbb{G}_{\mathscr{B}}$ are such CCPs when $\mathscr{B}$ has binary outcomes, i.e., when $d=2$. When the Tsirelson bound of $\mathscr{B}$ is obtained from an entangled state with $D>2$, a strict inequality follows immediately from the fact that the state of Bob after Alice's measurement cannot be reproduced by sending a qubit. Explicit examples of such Bell inequalities in which prepare-transmit-measure protocols are weaker than their entanglement-assisted counterparts have been given in Refs.\cite{PZ10, W10}.

\textit{Discussion.---} We have introduced a game-theoretic framework for studying the ability of quantum correlations obtained from entangled states to assist information processing tasks, as compared to that of prepare-transmit-measure protocols involving only a single quantum system. Importantly, concerning the former resource, we showed that the performance in our CCPs is analogous to the ability of quantum theory to violate a Bell inequality. This opens the door for systematic studies of the comparative nonclassical abilities of the two quantum resources. In particular, we show that for CCPs corresponding to a large class of Bell inequalities, the degree of achievable nonclassicality using a prepare-transmit-measure protocol is as least as much as an entanglement-assisted strategy. Previous case studies \cite{PZ10, magic7, TMP16} further support the potential richness of the relation between the two types of quantum protocols.  Furthermore, the part of our work concerning correlations due to entanglement  can be understood as a generalization of the results of Ref.\cite{BZ} from two-outcome to many-outcome Bell inequalities. Additionally, we presented a proof of the optimality of linear messaging strategies, which was missing in Ref.\cite{BZ}.

From a point of view of possible applications, we note that using our mapping between Bell inequalities and CCPs one can systematically transform many certificates of genuine nonclassical behavior in device independent entanglement assisted protocols to analogous semi-device independent prepare-transmit-measure protocols. Typically, such semi device-independent protocols are somewhat less secure but more efficient than their device independent counterparts. However, due to our relation in Eq.\eqref{comp}, one may obtain further advantages in the efficiency of semi device-independent information processing tasks from the fact that CCPs in a prepare-transmit-measure scheme can to a further extent outperform the classical bound as compared to Bell inequality violations. 

Our work leaves multiple open questions of which we mention some of the more challenging ones: 1) Further qualitative and quantitative characterization of the relation between correlations due to entanglement and protocols based on preparations and measurements of single quantum systems is a key open problem for understanding the extent of nonclassicality enabled by quantum theory, 2) We have only considered bipartite Bell inequalities. Can the mapping between Bell inequalities and CCPs be extended to multipartite scenarios? How will prepare-transmit-measure protocols behave in such scenarios when intermediate partners appear in the chain of communication?, 3) In recent years, much effort has been directed at characterizing Bell-type quantum correlations from information-theoretic principles. Our results suggest that similar attempts to understand the correlations due to single quantum systems may be of interests.

{\em Acknowledgments---} AT acknowledges financial support from the Swiss National Science Foundation (Starting grant DIAQ).
MZ is supported by EU advanced grant QOLAPS, and COPERNICUS grant-award of DFG/FNP.

\end{document}